\documentclass[12pt]{article}
\usepackage[margin=1.3in]{geometry}
\usepackage{amsthm}
\usepackage{amsmath}
\usepackage{natbib}
\usepackage[colorlinks,citecolor=blue,urlcolor=blue,filecolor=blue,backref=page]{hyperref}
\usepackage{graphicx}
\usepackage{color}
\usepackage{bm}
\usepackage{bbold}
\usepackage{authblk}
\usepackage{colortbl}
\definecolor{lightgray}{gray}{0.9}
\definecolor{darkgray}{gray}{0.7}
\usepackage{amsthm}

\newcommand{\argmax}{\operatornamewithlimits{argmax}}
\newcommand{\argmin}{\operatornamewithlimits{argmin}}
\newcommand{\iidsim}{\overset{iid}{\sim}}
\let\proglang=\textsf
\newcommand{\pkg}[1]{{\fontseries{b}\selectfont #1}}
\newcommand*{\defeq}{\mathrel{\vcenter{\baselineskip0.5ex              \lineskiplimit0pt
    \hbox{\footnotesize.}\hbox{\footnotesize.}}}%
     =}
\newcommand{\n}[1]{\left\lVert#1\right\rVert^2}
\newtheorem{theo}{Theorem}
\newtheorem{mydef}{Definition}
\newtheorem{myprop}{Proposition}
\providecommand{\keywords}[1]{\textbf{\textit{Keywords ---}} #1}
\definecolor{DarkGreen}{rgb}{0,0.6,0.5}
\definecolor{Purple}{rgb}{0.3,0,0.6}

\title{Informative extended Mallows priors in the Bayesian Mallows model}
\author[1]{Marta Crispino}
\author[2]{Isadora Antoniano-Villalobos}
\affil[1]{Mistis team, Inria Grenoble Rh\^{o}ne-Alpes, Inovall\'ee 655 avenue de l'Europe, Montbonnot France.}
\affil[2]{Department of Environmental Sciences, Informatics and Statistics, Ca' Foscari University of Venice, Venice, Italy and Bocconi Institute for Data Science and Analytics, Bocconi University, Milan, Italy.}
\date{}
\setcounter{Maxaffil}{0}

\begin{document}
\maketitle

\begin{abstract}
The aim of this work is to study the problem of prior elicitation for the Mallows model with Spearman's distance, a popular distance-based model for rankings or permutation data. Previous Bayesian inference for such model has been limited to the use of  the uniform prior over the space of permutations. We present a novel strategy to elicit subjective prior beliefs on the location parameter of the model, discussing the interpretation of hyper-parameters and the implication of prior choices for the posterior analysis. 
\end{abstract}

\keywords{
Bayesian subjective inference, conjugate priors, Mallows model for rankings, ranking data, permutations, permutohedron
}

\section{Motivation}\label{sec:moti}

In recent years, interest in preference data has increased, in part due to internet-related activities. The study of rankings, in particular, has received special attention, since this type of data arise in many fields. Notable examples are electoral systems in which voters are required to rank candidates, as is the case of the Irish general elections \citep{gormley2008mixture}; automatic recommender systems seeking to aggregate preferences in order to suggest products to the customers \citep{SunLebanonKidwell2012}; market research based on surveys in which competing services, or items, are compared or ranked by customers \citep{carconf}; medical applications, specially in genomics, in which genes are sometimes ranked according to their expression levels under various experimental conditions \citep{vitelli17}, and other data is often transformed into rankings in order minimize the effect of miscalibration error from the measuring devices
\citep{mollica14}.
In a coherent analysis of ranking data, the quantification of uncertainty regarding the estimated quantities is a fundamental aspect of decision making. It could, for instance, allow for actions grounded on unreliable estimates to be deferred until more data are available. 

The Mallows model (MM) \citep{Mallows1957, Diaconis1988} is a popular two-parameter distance-based family of models for ranking data, based on the assumption that a modal ranking, which can be interpreted as the consensus ranking of the population, exists. The probability of observing a given ranking is then assumed to decay exponentially fast as its distance from the consensus grows. Individual models with different properties can be obtained depending on the choice of distance on the space of permutations. The scale or precision parameter, controlling the concentration of the distribution, determines the rate of decay of the probability of individual ranks. 

We focus on the Mallows model with Spearman's distance (MMS), introduced by \citep{Mallows1957} with the name of rho-model, since Spearman's distance, when re-scaled to lie between $-1$ and $1$, arises naturally as the correlation between the ranks of two samples. \citet{Fligner1990} and \citet{vitelli17} have studied Bayesian inference for the MMS, limiting the analysis to the use of a uniform prior on the consensus ranking. As we discuss in Section \ref{sub:thetaknown}, this can be interpreted as a non-informative prior.  

Within the Bayesian literature, non-informative and objective priors have attracted much attention in the search for standard go-to procedures when prior information is unavailable. They can also be used to provide a sense of neutrality to the analysis by allowing the data to be the only source of information in the estimation procedure. However, when information is available from experts or external sources, it may be argued that a fully Bayesian analysis should include this subjective prior belief. \cite{dawid1997comments} clearly stated that ``no theory which incorporates non-subjective priors can truly be called Bayesian, and no amount of wishful thinking can alter this reality''. While admitting that both approaches may be valid in different situations, in this paper we explore the possibility of including genuine prior information, which might come from a literature review,  from an expert or from an earlier data analysis, into the Bayesian Mallows model for ranking data \citep{Mallows1957,vitelli17}. 

Previous proposals to include prior information on the consensus ranking of a MM include \cite{damien2012}, who suggest eliciting a prior on the consensus which is constant on conjugacy classes. In other words, they propose a prior that assigns \emph{a priori} equal probability to all permutations with the same cyclic structure. However, the conjugate classes defined by cyclic structures do not coincide with those defined by permutations lying at the same distance  (e.g. Spearman's) from the consensus ranking, making this approach impractical for the MMS, as it is difficult to assess a way in which prior information enters the model. \citet{MeilaBao2010} and \citet{Meila2010} consider the MM with Kendall's distance within the Bayesian paradigm and provide a conjugate prior for the model parameters which is known up to a normalization constant. However, their analysis does not extend to the MMS.
\cite{XuAlvoYu2018} propose an alternative family of models for rankings, based on a mapping of the data to the unit sphere \citep[see also][]{McCullagh1993}. The location parameter of their model has an interpretation analogous to that of the consensus ranking but it is not limited to be itself a ranking, thus allowing to express a more general form of consensus. The MMS is a particular case of this model, and the authors propose a conjugate Bayesian prior for the consensus parameter. However, the emphasis of the paper is on efficient inference via an approximation of the model's normalizing constant and the use of variational methods; prior elicitation and the inclusion of prior information are not discussed.

In the present work, which stems from Chapter 6 of \citet{phdcrispino}, we aim to provide experts using the MMS with a tool to express their beliefs, knowing the effect of prior choices in their analysis, should they wish to do so.  
With this in mind, by exploiting the notion of permutohedron, also known as permutation polytope, \citep{thompson,McCullagh1993, Marden1995}, we find an explicit form for a conjugate prior on the consensus parameter for the MMS. We then study its properties, presenting some theoretical insights on the prior elicitation problem.  Subjective prior information on the consensus ranking can therefore be elicited  by choosing proper hyper-parameters. In doing this, we initially assume the scale parameter of the MMS to be known, given that in most applications it is considered a nuisance, the interest being focused on the estimation of the consensus ranking \citep[see][Section 3]{vitelli17}.
The proposed prior density can handle a situation when only partial information is available, which is particularly relevant when the set of items to be ranked is very large. In such cases it is unlikely that a full ranking is \emph{a priori} available, while it could be possible to express some prior belief regarding which are the most (or least) preferred items. An additional advantage of our prior is given by the interpretability of the hyper-parameters in terms of the amount and type of information included.

The paper is organized as follows.
In Section \ref{sec:preli} we give an overview of the MMS.
In Section \ref{sec:prior} we discuss 
the novel results regarding the conjugate prior for the consensus parameter of the MMS, initially assuming the dispersion parameter to be known (Section \ref{sub:thetaknown}), then (Section \ref{sub:thetaunknown}) working with both parameters unknown. 
In Section \ref{sec:inference} we sketch the MCMC algorithm used to perform inference on our model, and in Section \ref{sec:example} we illustrate the inference on simple examples, exploiting both simulations and benchmark datasets. We conclude with some final remarks in Section \ref{sec:conclusion}. 

\section{Preliminaries}\label{sec:preli}
A (full) ranking of $n$ items, or $n$-ranking is defined as a map from a finite set, $\{A_1,...,A_n\}$, of labeled items to the space $\mathcal{P}_n$ of $n$-dimensional permutations. A ranking can, therefore, be represented by a vector $\bm{r} = (r_1,\ldots,r_n)$, where $r_i$ is the rank assigned to item $A_i$ according to some criterion. Formally, individual ranks are ordinal numbers, so that $r_i < r_j$ when item $A_i$ is preferred to (ranked lower than) item $A_j$.  Alternatively, rank data may be represented through orderings, which are ordered vectors of labels. Clearly, there is a one-to-one relationship between the two representations, e.g. a possible ranking of the set $A_1,\ldots,A_{5}$ is $\bm{r} = (1, 3, 4, 5, 2)$, corresponding to the ordering $\bm{o} = (A_1, A_5, A_2, A_3, A_4)$. Since the ranking vector representation has many advantages in terms of modelling, we will stick to it throughout the paper, and only use the orderings when necessary for illustrative purposes. 
Given the trivial one-to-one relation between ordinal and cardinal numbers, with a slight abuse of notation, one may consider $n$-rankings as $n$-dimensional vectors obtained by permuting the first natural numbers, $\{1,\ldots,n\}$. It is then easy to see that $\mathcal{P}_n$ is contained in a  $(n-1)$-dimensional affine  subspace  of $\mathbb{R}^n$. In fact, it is composed by the $n!$ points on the intersection between the hyper-plane with coordinate sums equal to $s_n=n(n+1)/2$ and the surface of an $n$-dimensional sphere of squared radius $c_n=n(n+ 1)(2n+ 1)/6$ centered at the origin.  Thus,  all the  points  of $\mathcal{P}_n$ lie  on  an  $(n-1)$-dimensional  sphere  of  squared  radius $\phi_n=n(n^2-1)/12$ centered at $\frac{(n+1)}{2}\bm 1_n$, where $\bm 1_n\in \mathbb{R}^n$ denotes the vector with all entries equal to $1$ \citep{McCullagh1993}. 

The Mallows model for ranking data \citep{Mallows1957} defines the probability that a random $n$-ranking  $\bm{R}$ takes a  value $\bm r\in\mathcal{P}_n$ as 
 \begin{equation}\label{eq:Mallows}
  \mathbb{P}(\bm{R}=\bm{r}\,|\bm{\rho},\theta,d) = \frac{1}{Z_d(\theta)}\exp\left[-\theta\, d(\bm{r}, \bm{\rho})\right],
 \end{equation}
where $\bm{\rho}\!\in\!\mathcal{P}_n$ is a location parameter representing the shared consensus ranking and $\theta\!\geq\!0$ is a scale parameter describing the concentration of the mass around the shared consensus. Different families of models are obtained through different choices of the right-invariant \citep{Diaconis1988} distance $d(\cdot,\cdot)$ on $\mathcal{P}_n$. Right-invariance (see also Definition \ref{rightInv} in the Appendix), which ensures that distances are independent of any relabeling of the items, is an important property in this context, as it ensures that the partition function 
$Z_d(\theta)=
\sum_{\bm{r} \in \mathcal{P}_{n}} e^{-\theta d(\bm{r}, \bm\rho_I)}$
of the MM does not depend on $\bm{\rho}$ \citep{mukherjee2016, vitelli17}.  In the above expression $\bm\rho_I=(1,2,3,\ldots,n)$ denotes the identity permutation.
Nevertheless, the number of terms in the sum makes direct calculation of this partition function unfeasible for all but very small values of $n$. Therefore, the MM is considered known up to a proportionality constant only, except for some particular choices of the distance, for which $Z_d$ may have a closed form \citep{Fligner1986}. Different approximation strategies have been proposed \citep[see e.g.][]{McCullagh1993, mukherjee2016, vitelli17}, allowing inference even with a large number, $n$, of items. Notice that the distance function induces a partition of $\mathcal{P}_n$ formed by sets of rankings which are equidistant from $\bm{\rho}$. Within each partition set, the MM assigns equal probability to all rankings.
As a consequence, exact computation of the partition function is possible for moderate $n$, for some choices of $d$ for which the cardinalities of the partition sets are known \citep[see e.g.][]{irurozki2016permallows, vitelli17}. The partitions of $\mathcal{P}_n$ associated to Spearman's distance play a crucial role in understanding the behavior of the prior proposed here for the MMS. 


In this work we focus on the Mallows model with Spearman's distance, given by $d_S(\bm{r},\bm{\rho})=||\bm{r}-\bm{\rho}||^2 =\sum_{i=1}^n\left({\rho}_i-{r}_i\right)^2,$ for $\bm r, \bm\rho\in\mathcal{P}_n$, which was first introduced with the name rho-model by \citet{Mallows1957}.  Notice that Spearman's distance is an unnormalized version of the Spearman's rank correlation, used to measure the statistical correlation between the ranks of two variables, but, when rankings are considered as vectors in $\mathbb R^n$, it is simply the squared Euclidean distance, or $L_2$-norm. Therefore, we say that a random ranking $\bm R$ follows an MMS distribution, denoted by $\bm R|\bm\rho,\theta\sim\mathcal{M}(\bm\rho,\theta)$, if its probability mass function is given by
 \begin{equation}\label{eq:ProbMassMallows}
  p(\bm{R}\,|\bm{\rho},\theta)\defeq \mathbb{P}(\bm R =\bm{r}\,|\bm{\rho},\theta)= \frac{1}{Z(\theta)}\exp\left[-\theta\, \n{\bm\rho- \bm{R}}\right],
 \end{equation}
where $Z(\theta)\defeq Z_{d_S}(\theta)$ does not have a closed form. Notice, however, that when $\theta = 0$, the MMS reduces to the uniform distribution on $\mathcal{P}_n$.

Given a sample $\bm{R}_1,...,\bm{R}_N|\bm{\rho}, \theta\iidsim\mathcal{M}(\bm{\rho},\theta)$, the likelihood function takes the form 
\begin{equation}\label{likelihood}
\begin{split}
p(\bm{R}_1,...,\bm{R}_N|\bm{\rho},\theta)=\frac{1}{Z(\theta)^N}\exp\left[-\theta\sum_{j=1}^N\n{\bm\rho-\bm R_{j}}\right].
\end{split}
\end{equation}

Therefore, for $\theta>0$, the maximum likelihood estimator (MLE) $\bm\rho_{\text{MLE}}$ is given by
$$\bm{\rho}_{\text{MLE}}=
\argmin_{\bm{\rho}\in\mathcal{P}_n}\sum_{j=1}^N\n{\bm\rho-\bm{R}_j}=
\argmax_{\bm{\rho}\in\mathcal{P}_n}\bm\rho\cdot\bm{\bar R},$$
where the dot denotes the scalar product on $\mathbb{R}^n$, and $\bm{\bar R}=(\bar R_1,\ldots,\bar R_n)$ is the sample mean vector of $\bar R_i=\frac{1}{N}\sum_{j=1}^N R_{ij}$, $i=1,\ldots,n$. 
This is not surprising as the kernel of the MMS distribution coincides with that of an $n$-dimensional Gaussian distribution, except it has a finite support. In other words, the MMS is the restriction of the $n$-dimensional gaussian to $\mathcal P_n$.
Clearly, if $\bm{\bar R}\in\mathcal P_n$, then the MLE simply coincides with the sample mean.  In general, however, $\bm{\bar R}\notin\mathcal P_n$ so a further consideration is required in order to solve the optimization problem. 

\begin{mydef}\label{def:permutohedron}
The \textbf{permutohedron} of order $n$, $\mathbb{pp}_n$, is an $(n-1)$-dimensional polytope embedded in an $n$-dimensional space, the vertices of which are formed by permuting the coordinates of the vector $(1, 2, 3, ..., n)$. Equivalently, it is the convex hull of the points $\bm{\rho}\in\mathcal{P}_n\subset \mathbb{R}^n$.
\end{mydef}

The set $\mathbb{pp}_n$ is sometimes called the \emph{permutation polytope} \citep[see e.g.][]{thompson,Marden1995}. This term, however, refers also to to a similar polytope whose vertices follow a different order. We, here, use the term permutohedron to avoid ambiguity.

\begin{mydef}\label{def:rank_of}
Let $\bm r\in\mathbb{pp}_n$, such that $r_i\neq r_h$ for all $i\neq h$. The rank $\bm{Y}(\bm{r})\in\mathcal{P}_n$ defined by  $Y_i=Y_i(\bm{r})=\sum_{h=1}^n\mathbb{1}(r_h\leq r_i)$, $i=1,...,n$ is called the \textbf{rank vector of} $\bm r$, where $\mathbb{1}(E)$ denotes the indicator function taking the value 1 if the event $E$ is true, and 0 otherwise.
\end{mydef}

By the definition of convex hull, $\bm{\bar R}\in\mathbb{pp}_n$ for any set of rankings, $\bm R_1,\ldots,\bm R_N \in \mathcal{P}_n$. The following proposition shows that $\bm{\rho}_{\text{MLE}}=\bm{Y}(\bar{\bm{R}})$ whenever $\bar R_i\neq \bar R_h$ for all $i\neq h$.
 
\begin{myprop}\label{MIOtheo}
Let $\bm{R}_1,...,\bm{R}_N|\bm{\rho}, \theta\iidsim\mathcal{M}(\bm{\rho},\theta)$, and assume that $\bar{\bm{R}}\in\mathbb{pp}_n$ is such that $\bar{R}_i\not=\bar{R}_h$, for each $i\not=h$. 
Then  $\bm{\rho}_{\text{MLE}}=\bm{Y}(\bar{\bm{R}})$.
\end{myprop}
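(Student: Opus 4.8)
The plan is to exploit the equivalent formulation already established in the excerpt, $\bm{\rho}_{\text{MLE}}=\argmax_{\bm{\rho}\in\mathcal{P}_n}\bm\rho\cdot\bar{\bm R}$, and to recognize the maximization of $\bm\rho\cdot\bar{\bm R}=\sum_{i=1}^n\rho_i\bar R_i$ over permutations $\bm\rho$ of $(1,\ldots,n)$ as an instance of the rearrangement inequality. The key observation is that, by Definition \ref{def:rank_of}, the coordinates of $\bm Y(\bar{\bm R})$ reproduce the ordering of the coordinates of $\bar{\bm R}$: since the $\bar R_i$ are pairwise distinct, $Y_i(\bar{\bm R})<Y_h(\bar{\bm R})$ if and only if $\bar R_i<\bar R_h$. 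Thus $\bm Y(\bar{\bm R})$ is precisely the permutation assigning the largest integer $n$ to the position of the largest entry of $\bar{\bm R}$, the next largest integer to the position of the next largest entry, and so on.

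First I would record the reduction explicitly for completeness: expanding $\n{\bm\rho-\bm R_j}=\n{\bm\rho}-2\,\bm\rho\cdot\bm R_j+\n{\bm R_j}$ and summing over $j$, the term $\n{\bm\rho}$ is constant over $\mathcal{P}_n$ (all permutations lie on the sphere described in Section \ref{sec:preli}) and $\sum_j\n{\bm R_j}$ does not depend on $\bm\rho$, so minimizing the total squared distance is equivalent to maximizing $\sum_j\bm\rho\cdot\bm R_j=N\,\bm\rho\cdot\bar{\bm R}$, hence to maximizing $\bm\rho\cdot\bar{\bm R}$.

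The core of the argument is then a standard exchange (swap) argument. Suppose $\bm\rho^{\ast}\in\mathcal{P}_n$ maximizes $\bm\rho\cdot\bar{\bm R}$ but is not ordered as $\bar{\bm R}$; then there exist indices $i\neq h$ with $\bar R_i<\bar R_h$ yet $\rho^{\ast}_i>\rho^{\ast}_h$. Let $\bm\rho'$ be obtained from $\bm\rho^{\ast}$ by interchanging the two entries $\rho^{\ast}_i$ and $\rho^{\ast}_h$, which is again an element of $\mathcal{P}_n$. A direct computation yields $\bm\rho'\cdot\bar{\bm R}-\bm\rho^{\ast}\cdot\bar{\bm R}=(\rho^{\ast}_i-\rho^{\ast}_h)(\bar R_h-\bar R_i)>0$, contradicting the maximality of $\bm\rho^{\ast}$. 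Hence any maximizer must order its entries in the same way as $\bar{\bm R}$, and because the $\bar R_i$ are distinct this ordering, and therefore the maximizer itself, is unique and coincides with $\bm Y(\bar{\bm R})$.

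I expect no serious obstacle here: the only points requiring care are to confirm that the swapped vector $\bm\rho'$ is still a permutation (so that the comparison remains inside $\mathcal{P}_n$), and to note that the strict inequality in the exchange step, guaranteed precisely by the distinctness hypothesis $\bar R_i\neq\bar R_h$, is what upgrades the conclusion from ``a maximizer'' to ``the unique maximizer'' $\bm Y(\bar{\bm R})$.
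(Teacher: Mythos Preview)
Your argument is correct. The reduction to maximizing $\bm\rho\cdot\bar{\bm R}$ is exactly as in the text, and the exchange step you propose is the standard proof of the rearrangement inequality; the distinctness hypothesis gives strict inequality in the swap comparison, which is precisely what is needed for uniqueness of the maximizer.

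The route differs from the paper's. The paper rewrites the inner product via right-invariance as $\sum_i i\,(\bar{\bm R}\circ\bm\rho^{-1})_i$, invokes an external result (Lemma~2 of \citet{hullermeier2008label}) to characterize the maximizer as the permutation that sorts $\bar{\bm R}\circ\bm\rho^{-1}$ increasingly, and then matches this with the analogous characterization of the maximizer of $\sum_i\rho_i\,Y_i(\bar{\bm R})$. Your approach bypasses both the change of variables and the external lemma, proving the rearrangement statement directly by a swap argument. The gain is self-containment and transparency; the paper's formulation, on the other hand, makes the role of right-invariance explicit and connects to the composition-of-permutations machinery used elsewhere. Substantively, both arguments rest on the same combinatorial fact.
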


In order to clarify ideas, consider three samples of size $N=100$ of $5$-dimensional rankings, with sample means $\bm{\bar R}_1 = (1, 3, 4, 5, 2)$, $\bm{\bar R}_2 = (1.1, 3, 4, 4.9, 2)$ and $\bm{\bar R}_3 = (1.5, 2.9, 3.8, 4.7, 2.1)$, respectively. All three sample mean vectors have the same rank vector and, consequently, lead to the same MLE of the consensus ranking, $\bm{\rho}_{\text{MLE}}=\bm{Y}(\bar{\bm{R}}_1)=\bm{Y}(\bar{\bm{R}}_2)=\bm{Y}(\bar{\bm{R}}_3) = \bar{\bm{R}}_1\in\mathcal{P}_5$. Notice, however, that while the rank vector transformation is formally correct, ensuring that the MLE is a proper ranking, it entails a loss of information. Intuitively, looking at the three sample means, one would attach greater uncertainty to the MLE obtained from the third sample, even if this information is lost when looking at the corresponding rank vector, as it is known that a point estimate alone does not provide an uncertainty assessment. 
Definition \ref{def:rank_of} can be generalized to the case of a vector $\bm r$ with ties, i.e., for any $\bm r\in\mathbb{pp}_n$, by letting $Y(\bm r)$ be any ranking whose elements satisfy the same ordering relation as those of $\bm r$. However, the ranking vector in this case would not be unique and so, a unique $\bm\rho_{\text{MLE}}$ would not exist. It would be possible, for instance, to obtain a fourth sample with sample mean $\bar{\bm{R}}_4 = (3, 3, 3, 3, 3)$. Then, any ranking in $\mathcal{P}_5$ would be a rank vector for $\bar{\bm{R}}_4$. This corresponds to a flat likelihood function, for which there is no MLE. Intuitively, any permutation has the same likelihood of being the consensus ranking. This idea is related to the spread or variability of the sample, which in turn is associated to the concentration of the MMS distribution around $\bm\rho$, in other words, to the precision parameter $\theta$. This is, again, not surprising, considering the relation of the MMS with the Gaussian distribution, highlighted above.

It follows that, even if in most applications $\theta$ is considered a nuisance parameter and the main interest is in the estimation of $\bm\rho$, it is nevertheless necessary to estimate $\theta$ in order to get an idea of the reliability of the $\bm\rho$ estimate. A MLE for $\theta$ can be found as the solution to the equation
$$h(\theta)=\frac{Z^\prime(\theta)}{Z(\theta)}-\frac{1}{N}\sum_{j=1}^N\n{\bm{R}_j-\bm{\rho}_\text{MLE}}=0,$$
assuming a unique $\bm{\rho}_\text{MLE}$ exists. This can be done numerically, for instance, via a Newton-Raphson algorithm \citep[see e.g.][]{Marden1995}, but the calculations may be cumbersome for all but small values of $n$.

The Bayesian paradigm is then a natural solution for making inference on the MMS, not only for quantifying uncertainty, but also for including prior information into the statistical analysis. 
In the remainder, we propose and study an informative prior density, specifically tailored to the MMS, building on the Bayesian Mallows model for ranking data of \citet{vitelli17}.

\section{An informative prior}\label{sec:prior}
This section is devoted to the proposal of a prior distribution for the parameters of the MMS. In Section \ref{sub:thetaknown} we analyze the simpler case in which the precision parameter $\theta$ is assumed known. Then, in Section \ref{sub:thetaunknown}, we give an intuition on how to deal with the more general and realistic case of unknown $\theta$.

\subsection{Known precision parameter}\label{sub:thetaknown}
For fixed $\theta$, the likelihood \eqref{likelihood} can be simplified as
\begin{equation}\label{lik1}
\begin{split}
p(\bm{R}_1,...,\bm{R}_N|\bm{\rho},\theta)&\propto \exp\left[2\theta\sum_{j=1}^N\bm\rho\cdot\bm R_{j}\right]\propto\exp\left(2\theta N\bm\rho\cdot\bm{\bar R}\right).
\end{split}
\end{equation}

Therefore, a conjugate prior for $\bm{\rho}\in\mathcal{P}_n$ is given by
\begin{equation}\label{eq:conjPrior}
\pi(\bm{\rho}|\bm{\rho}_0,\eta_0)=\frac{1}{Z^*(\eta_0,\bm{\rho}_0)}\exp\left[-\eta_0||\bm\rho_{0}-\bm\rho||^2\right]\propto \exp\left[2\eta_0\,\bm\rho\cdot\bm\rho_{0}\right].\end{equation}
We call this, the Extended Mallows Model with Spearman distance (EMMS) and write $\bm\rho|\eta_0,\bm\rho_0\sim\mathcal{EM}(\bm{\rho}_0,\eta_0)$. The two parameters $\eta_0\geq 0$ and $\bm{\rho}_0\in\mathbb{pp}_n$ can be interpreted as precision and location parameters, respectively, analogous to those of the MMS. In particular, $\eta_0$ determines the concentration of the distribution around $\bm{\rho}_0$ with $\eta_0=0$ corresponding to a uniform prior on $\mathcal{P}_n$, while larger values reflect stronger prior belief on $\bm{\rho}_0$. 
Notice, however, that the modal parameter cannot be interpreted, in general, as a consensus ranking, except when $\bm{\rho}_0\in \mathcal{P}_n$, in which case the EMMS simply reduces to a MMS. Recall that Mallows models have the limitation that all rankings which are equidistant (in terms of the distance in \eqref{eq:Mallows}) from the consensus ranking have the same probability. For the MMS, this implies in particular that it is not possible to freely assign different masses to different rankings at the same Spearman's distance to the consensus ranking. 
By allowing the modal parameter to take any value in the permutohedron $\mathbb{pp}_n$, that is, to be any convex combination of the elements of  $\mathcal{P}_n$, such structure can be broken, allowing for a more flexible distribution of the mass.  In fact, the prior \eqref{eq:conjPrior} assigns equal mass to all permutations that lie at the same $L_2$-norm from $\bm{\rho}_0$, and greater mass is given to permutations closest to $\bm\rho_0$. For instance, consider the EMMS centered at the barycenter of the permutohedron, that is, with $\bm{\rho}_0= \frac{(n+1)}{2}\bm 1_n$. This results in a uniform distribution on rankings for any value of the precision parameter $\eta_0$. Small deviations from uniformity can be achieved by letting $\eta_0 > 0$ and $||\bm{\rho}_0 - \frac{(n+1)}{2}\bm 1_n||^2$ be small. The direction of the vector $\bm{\rho}_0 - \frac{(n+1)}{2}\bm 1_n$ in $\mathbb{R}^n$ determines the rankings for which the mass increases and those for which it decreases.
   
The case described above, where $\bm{\rho}_0= \frac{(n+1)}{2}\bm 1_n$, is therefore equivalent to assigning to $\bm\rho$ the uniform prior on $\mathcal{P}_n$, $\pi(\bm\rho)=\frac{1}{n!}$, like in \citep{Fligner1990, vitelli17}. As \citet{berger2012objective} discuss, the natural reference prior for a discrete parameter taking values on a finite support is usually the uniform prior on the parameter space. However, the authors show that the uniform prior, in some cases, may not be objective, which may be a property sought by the analyst.
The following result, which holds for the MM with any right-invariant distance, shows that the uniform prior on all rankings, which corresponds to prior \eqref{eq:conjPrior} with (i) $\eta_0=0$ and $\forall \bm\rho_0$, or with (ii) $\bm\rho_0 = \frac{(n+1)}{2}\bm 1_n$ and  $\forall\eta_0$, is formally the objective prior in the sense of \citet{villa2015}. The authors propose that an objective prior for a discrete parameter, in this case the mode of the MMS, should assign to each possible value $\bm\rho$ a mass proportional to the minimum Kullback-Leibler divergence between the model with parameter value $\bm\rho$ and the model with any other parameter value, say $\bm\rho^*\neq\bm\rho$. In this way, prior mass is associated to the ``worth'' of each possible parameter value, defined as the loss in information that would derive from assigning prior probability zero to such value, if it was true.

\begin{myprop}\label{prop:objPrior}
For any right-invariant distance $d$, the objective prior in the sense of \citet{villa2015} for the MM is the uniform prior  on the space of permutations, \normalfont{$\boldsymbol{\rho}\sim\text{Unif}(\mathcal{P}_n)$}.
\end{myprop}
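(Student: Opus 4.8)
The plan is to exploit the construction of \citet{villa2015}, under which the objective prior assigns to each $\bm\rho$ a mass equal to a fixed strictly increasing transformation of the ``worth''
$$g(\bm\rho)=\min_{\bm\rho^*\neq\bm\rho} D_{\mathrm{KL}}\bigl(p(\cdot|\bm\rho,\theta)\,\|\,p(\cdot|\bm\rho^*,\theta)\bigr).$$
Since the prior depends on $\bm\rho$ only through $g(\bm\rho)$, it suffices to prove that $g$ is constant on $\mathcal P_n$; uniformity then follows at once. The whole argument therefore reduces to showing that the minimal KL divergence from the model indexed by $\bm\rho$ to any competing model does not depend on $\bm\rho$.

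First I would compute the pairwise divergence in closed form. Because the partition function $Z_d(\theta)$ is independent of the location parameter, the log-ratio of two Mallows masses telescopes to $\theta\,[\,d(\bm r,\bm\rho^*)-d(\bm r,\bm\rho)\,]$, whence
$$D_{\mathrm{KL}}\bigl(p(\cdot|\bm\rho)\,\|\,p(\cdot|\bm\rho^*)\bigr)=\theta\Bigl(\mathbb E_{\bm\rho}\bigl[d(\bm R,\bm\rho^*)\bigr]-\mathbb E_{\bm\rho}\bigl[d(\bm R,\bm\rho)\bigr]\Bigr),$$
where $\mathbb E_{\bm\rho}$ denotes expectation under $p(\cdot|\bm\rho,\theta)$. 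It remains to understand how these two expectations depend on $\bm\rho$ and $\bm\rho^*$.

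The key step is a change of variables powered by right-invariance. Identifying rankings with elements of the symmetric group and writing $\circ$ for composition, right-invariance reads $d(\bm a\circ\bm c,\bm b\circ\bm c)=d(\bm a,\bm b)$. Applying the bijection $\Phi:\bm r\mapsto\bm r\circ\bm\rho^{-1}$ of $\mathcal P_n$ and setting $\bm\tau=\bm\rho^*\circ\bm\rho^{-1}$, I would verify \emph{simultaneously} that $d(\bm r,\bm\rho)=d(\Phi(\bm r),\bm\rho_I)$ and $d(\bm r,\bm\rho^*)=d(\Phi(\bm r),\bm\tau)$. Re-indexing the defining sums by $\bm s=\Phi(\bm r)$ (the Boltzmann weight $e^{-\theta d(\bm r,\bm\rho)}$ turns into $e^{-\theta d(\bm s,\bm\rho_I)}$) recentres both expectations at the identity and gives
$$D_{\mathrm{KL}}\bigl(p(\cdot|\bm\rho)\,\|\,p(\cdot|\bm\rho^*)\bigr)=\theta\Bigl(\mathbb E_{\bm\rho_I}\bigl[d(\bm R,\bm\tau)\bigr]-\mathbb E_{\bm\rho_I}\bigl[d(\bm R,\bm\rho_I)\bigr]\Bigr),$$
so that the divergence depends on the pair $(\bm\rho,\bm\rho^*)$ only through $\bm\tau=\bm\rho^*\circ\bm\rho^{-1}$.

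Finally I would close the argument by noting that right multiplication by $\bm\rho^{-1}$ is a bijection of the group, so that as $\bm\rho^*$ ranges over $\mathcal P_n\setminus\{\bm\rho\}$ the element $\bm\tau$ ranges over the fixed set $\mathcal P_n\setminus\{\bm\rho_I\}$. Consequently
$$g(\bm\rho)=\theta\min_{\bm\tau\neq\bm\rho_I}\mathbb E_{\bm\rho_I}\bigl[d(\bm R,\bm\tau)\bigr]-\theta\,\mathbb E_{\bm\rho_I}\bigl[d(\bm R,\bm\rho_I)\bigr],$$
and neither term involves $\bm\rho$; hence $g$ is constant and the objective prior is uniform. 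I expect the only real obstacle to be bookkeeping: phrasing right-invariance with the correct (right) action in the vector representation of rankings and checking that the single map $\Phi$ carries \emph{both} distances across at once, rather than any genuine analytic difficulty. It is also worth recording that the construction tacitly requires $\theta>0$, since for $\theta=0$ every $\bm\rho$ yields the uniform model and the worth degenerates, and that all divergences are automatically finite because $\mathcal P_n$ is finite and the masses are strictly positive.
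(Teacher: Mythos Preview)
Your proposal is correct and follows essentially the same route as the paper: compute the KL divergence, use right-invariance via the change of variables $\bm r\mapsto\bm r\circ\bm\rho^{-1}$ to recentre at $\bm\rho_I$, and then observe that minimising over $\bm\rho^*\neq\bm\rho$ is equivalent to minimising over $\bm\tau\neq\bm\rho_I$ with $\bm\tau=\bm\rho^*\circ\bm\rho^{-1}$, so the result is independent of $\bm\rho$. Your write-up is in fact more explicit than the paper's on two points the paper leaves implicit: that a \emph{single} bijection $\Phi$ carries both distance terms across simultaneously, and that the minimisation set is mapped bijectively to the fixed set $\mathcal P_n\setminus\{\bm\rho_I\}$; your framing via a generic increasing transformation of the worth is also slightly more robust than the paper's ``proportional to'' phrasing.
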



Note that $\bm{\rho}_0\in\mathbb{pp}_n$ may not be a permutation, so that the partition function in \eqref{eq:conjPrior},
\begin{equation}
Z^*(\eta_0,\bm{\rho}_0)=\sum_{\bm{\rho}\in\mathcal{P}_n}\exp\left[-\eta_0||\bm\rho_{0}-{\bm\rho}||^2\right],
\end{equation}
in general depends on $\bm\rho_0$. This implies that \eqref{eq:conjPrior} is known up to a normalization constant. However, in the following sections we show that this drawback can be overcame in practice. Moreover, the fact that $\bm{\rho}_0$ does not need to be a permutation is very convenient for the elicitation problem. The cases where (i) we are interested in including partial information about the consensus ranking, or when (ii) multiple experts' opinions are available, are naturally handled by this prior. 
For instance, imagine a simple example of case (i), where we have some information only on the top-$k$ ranked items. We can define $\bm{\rho}_0$ by fixing the top-$k$ items' ranks $\{1,...,k\}$, and giving the same uniform value $(n+k+1)/2$ to the remaining bottom-$(n-k)$ items' ranks. 
An example of case (ii) is to assume that two (or more) experts believe, \emph{a priori}, in different modal rankings, say $\bm\rho_{01}$ and $\bm\rho_{02}$. An analyst wishing to express an equally strong prior on such two rankings may simply use the prior \eqref{eq:conjPrior} with $\bm\rho_{0}=(\bm\rho_{01}+\bm\rho_{02})/2\in\mathbb{pp}_n$.

A possible reparametrization of \eqref{eq:conjPrior} is obtained by letting $\eta_0=\theta_0 N_0$, which allows to further understand the role of the precision parameter for prior elicitation. One may imagine eliciting prior information from an expert who believes that the consensus ranking is given by some $\bm\rho_0\in\mathcal{P}_n$ and expresses a degree of uncertainty in this belief through some precision, say $\theta_0$. Within scenario (ii), one may imagine that the analyst, having encountered a number $N_0$ of experts who coincide with this view, wishes to summarize this aggregated information by increasing the prior precision. This is achieved by letting $\eta_0=\theta_0 N_0$, thus expressing that individual prior belief regarding $\bm\rho_0$ is reinforced by various experts. 
In the limit, infinite prior precision $\eta_0\rightarrow\infty$ may correspond to a single expert with extremely strong prior belief ($\theta_0\rightarrow\infty$) or to an extremely large number of experts $N_0\rightarrow\infty$ with some prior belief ($\theta_0>0$). Intuitively, one may imagine a situation in which the analyst aggregates prior opinions from many experts by calculating $\rho_0$ and $\eta_0$ as convex and linear combinations, respectively, of the individual $\rho_{0,j}, \theta_{0,j}$ parameters elicited from each expert $j$, and considering the number $N_{0,j}$ of experts who agree on both. In order to understand how this could be done, one may consider how the information passes on from the prior to the posterior. 

The posterior density for $\bm{\rho}$ is given by
\begin{equation}\label{eq:postThetaKnown}
\begin{split}
\pi^N(\bm{\rho}\,|\theta)\propto &\exp\left[2(\eta_0+\theta N)\,\,\bm\rho\cdot \left(\frac{\theta N}{\eta_0+\theta N}\bm{\bar R}+\frac{\eta_0}{\eta_0+\theta N}\bm\rho_{0}\right)\right]
\end{split}
\end{equation}
The first thing we observe is that the proposed prior is indeed conjugate. 
In other words, if $\bm{R}_1,...,\bm{R}_N\,|\,\bm\rho,\theta\iidsim\mathcal{M}(\bm\rho,\theta)$ and $\bm\rho\,|\, \bm\rho_0, \eta_0\sim\mathcal{EM}(\bm\rho_0,\eta_0)$, then it holds  that $\bm{\rho}\,|\,\theta, \bm\rho_0,\eta_0, \bm{R}_1,...,\bm{R}_N\sim \mathcal{EM}(\bm\rho_N,\eta_N)$, with updated parameters:
\begin{align}
&\bm{\rho}_N=\frac{ \theta N}{\eta_0+\theta N}\bar{\bm{R}}+\frac{\eta_0}{\eta_0+\theta N}\bm{\rho}_{0}\in\mathbb{pp}_n\\
&\eta_N= \eta_0 + \theta N \geq 0.
\end{align}
In particular, the prior hyper-parameters $(\bm\rho_0,\eta_0)$ elicited under scenario (ii) can be interpreted as the posterior parameters obtained by an expert who observes $N = \eta_0/\theta$ instances of $\bm\rho_0\in\mathcal{P}_n$, where $\theta$ is the known precision of the true data distribution according to the MMS. Clearly, this interpretation excludes any value of $\eta_0$ that is not a multiple of $\theta$. Nevertheless, such exercise helps to provide an intuition of the role of the prior hyperparameters. Notice that, since any $\bm\rho_0\in\mathbb{pp}_n$ can be expressed as a convex combination of rankings in $\mathcal{P}_n$, the prior mode elicited in scenario (ii) can always be interpreted as arising from multiple (possibly infinite) experts, the calculation of the individually elicited parameters being an exercise in linear algebra. The prior precision parameters for which this interpretation is valid, however, are limited. 
Therefore, in the case in which $\theta$ is assumed known, an interesting case arises by setting $\theta_0=\theta$, i.e. $\eta_0=\theta N_0$.
$N_0$ can be interpreted as an \emph{a priori} sample size, representing the amount of information on which an expert bases the prior belief about the central tendency of $\bm\rho$.
In this sense, the posterior consensus parameter can be viewed as a weighted average of the prior hyper-parameter $\bm{\rho}_{0}$ and the observed mean value $\bar{\bm{R}}$, with weights proportional to the corresponding sample sizes. For any finite prior precision $\theta N_0<\infty$, as the sample size increases, the posterior accumulates mass around $\bm\rho_N$, which approaches the sample mean, $\bm{\bar R}$. Some insights into the role of the prior hyper-parameters can be obtained by considering limiting situations.
An infinite prior precision $N_0=\infty$ would express \emph{a priori} certainty, by accumulating all the prior mass on $\bm\rho_0$, a choice that would make sense only for $\bm\rho_0\in\mathcal{P}_n$. 
The posterior would maintain the infinite precision $\theta_N=\infty$ thus accumulating mass on $\bm\rho_N = \bm\rho_0$. In such hypothetical case, learning would be possible only for infinite sample sizes, with
$$
\lim_{N\rightarrow\infty}\bm{\rho}_N =
\begin{cases}
 \bm{\rho}_{0} & \text{if} \quad N_0/N\rightarrow\infty \\
 (1-\alpha)\bm{\rho}_{0} + \alpha \bar{\bm{R}} & \text{if} \quad N_0/N\rightarrow (1/\alpha-1)\in(0,1)\\
    \bar{\bm{R}} & \text{if} \quad N_0/N\rightarrow 0
   \end{cases}
$$

Notice that, by Proposition \ref{MIOtheo}, the maximum \emph{a posteriori} (MAP) of $\bm{\rho}$ is unique and given by $\bm{\rho}_{\text{MAP}}=\bm{Y}(\bm{\rho}_N)$ provided that all the coordinates of the vector $\bm\rho_N$ take different values. Furthermore, $\eta_N\rightarrow\infty$ as the sample size $N$ grows, thus increasing posterior precision. 


The prior \eqref{eq:conjPrior} has a shape which is analogous to the one discussed earlier by \citet{damien2012}. In their paper, however, the authors propose the use of the Hausdorff distance among subsets (conjugacy classes) of $\mathcal{P}_n$, in place of the squared $L_2$-norm between a ranking and the location parameter of the prior \eqref{eq:conjPrior}, which is an element of the permutation polytope. This difference implies that the proposal of \citet{damien2012} assigns equal probability to all permutations within a conjugacy class. In particular, all rankings in the  modal conjugacy class of the prior are assigned the same mass, even if information may not be available on all such rankings. Furthermore, two permutations in the same class are not necessarily close with respect to the distance used in the MM, which is a crucial element of the model specification. 
Our proposal, instead, is specifically tailored to the MMS, and gives the possibility to choose whether to give maximum prior weight to a unique permutation, or to more than one.  
At the same time, we note in the following Theorem that the results in \citet[][Section 3.3]{damien2012} can be extended to our prior \eqref{eq:conjPrior}.

\begin{theo}\label{teo:Gupta}
Let $D(\bm\rho)=\sum_{j=1}^N\n{\bm R_j - \bm\rho}$, and $D^*(\bm\rho)=\n{\bm\rho_0- \bm\rho}$. Then:
\begin{enumerate}
\item[a)] for each $\bm\rho_1, \bm\rho_2\in\mathcal{P}_n$, and given $\theta, \eta_0$, the ranking $\bm\rho_1$ will have higher posterior probability than $\bm\rho_2$ if and only if
\begin{equation}\label{eq:theo}
    D(\bm\rho_1)-D(\bm\rho_2)<\gamma [D^*(\bm\rho_2)-D^*(\bm\rho_1)],
\end{equation}
where $\gamma = \eta_0/\theta.$
\item[b)] for each $\bm\rho\in\mathcal{P}_n$, if $D^*(\bm\rho)\geq D^*(\bm\rho_\text{MLE})$, $\bm\rho$ will have lower posterior probability than $\bm\rho_\text{MLE}$.
\item[c)] for each $\bm\rho_1, \bm\rho_2\in\mathcal{P}_n$, if $D^*(\bm\rho_1)=D^*(\bm\rho_2)$, $\bm\rho_1$ will have higher posterior probability than $\bm\rho_2$ if and only if $D(\bm\rho_1)<D(\bm\rho_2)$.
\item[d)] for each $\bm\rho_1, \bm\rho_2\in\mathcal{P}_n$, if  $D(\bm\rho_1)<D(\bm\rho_2)$ and $D^*(\bm\rho_1)<D^*(\bm\rho_2)$, then $\bm\rho_1$ will have higher posterior probability than $\bm\rho_2$.
\end{enumerate}
\end{theo}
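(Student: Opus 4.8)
The plan is to reduce everything to a single clean expression for the posterior restricted to $\mathcal{P}_n$ and then read off (a)--(d) as elementary consequences. Combining the prior \eqref{eq:conjPrior} with the likelihood \eqref{lik1}, the posterior is $\pi^N(\bm\rho\mid\theta)\propto\exp[-\theta D(\bm\rho)-\eta_0 D^*(\bm\rho)]$. The single observation that makes this work is that every permutation has the same squared Euclidean norm $\n{\bm\rho}=c_n$ on $\mathcal{P}_n$; hence, expanding the squared norms in $D$ and $D^*$, the terms $\n{\bm\rho}$, $\n{\bm R_j}$ and $\n{\bm\rho_0}$ are all constant over $\mathcal{P}_n$ and get absorbed into the proportionality constant. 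This is precisely what turns the displayed posterior \eqref{eq:postThetaKnown} into the compact form above, which serves as the common engine for all four statements, so I would record it first.

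For part (a) I would simply compare the two unnormalized posterior masses. Since the normalizing constant cancels, $\bm\rho_1$ has higher posterior probability than $\bm\rho_2$ if and only if $-\theta D(\bm\rho_1)-\eta_0 D^*(\bm\rho_1) > -\theta D(\bm\rho_2)-\eta_0 D^*(\bm\rho_2)$. Rearranging and dividing by $\theta>0$ yields $D(\bm\rho_1)-D(\bm\rho_2) < (\eta_0/\theta)[D^*(\bm\rho_2)-D^*(\bm\rho_1)]$, which is \eqref{eq:theo} with $\gamma=\eta_0/\theta$. This is the entire content of (a); parts (b)--(d) are corollaries obtained by specializing it.

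For (c), imposing $D^*(\bm\rho_1)=D^*(\bm\rho_2)$ makes the right-hand side of \eqref{eq:theo} vanish, so the criterion collapses to $D(\bm\rho_1)<D(\bm\rho_2)$. For (d), the hypotheses give $D(\bm\rho_1)-D(\bm\rho_2)<0$ while $\gamma[D^*(\bm\rho_2)-D^*(\bm\rho_1)]\geq 0$ (recall $\gamma\geq 0$), so \eqref{eq:theo} holds automatically. For (b) I would apply (a) with $\bm\rho_1=\bm\rho_{\text{MLE}}$ and $\bm\rho_2=\bm\rho$: the defining property of the MLE is that it minimizes $D$ over $\mathcal{P}_n$, so $D(\bm\rho_{\text{MLE}})-D(\bm\rho)\leq 0$, while the hypothesis $D^*(\bm\rho)\geq D^*(\bm\rho_{\text{MLE}})$ makes the right-hand side nonnegative; hence $\bm\rho_{\text{MLE}}$ beats $\bm\rho$.

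The only genuinely delicate point is strictness in (b), since the bound there compares a nonpositive with a nonnegative quantity and one must rule out simultaneous equality. This is where I would invoke the uniqueness of $\bm\rho_{\text{MLE}}$ guaranteed by Proposition \ref{MIOtheo} (distinct coordinates of $\bar{\bm R}$), which upgrades $D(\bm\rho_{\text{MLE}})\leq D(\bm\rho)$ to the strict $D(\bm\rho_{\text{MLE}})<D(\bm\rho)$ for every $\bm\rho\neq\bm\rho_{\text{MLE}}$, forcing the left-hand side strictly below the right-hand side and closing the argument; the boundary case $\eta_0=0$ (uniform prior, $\gamma=0$) causes no trouble, since the strict $D$-inequality alone then suffices.
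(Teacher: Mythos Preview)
Your argument is correct and is precisely the direct computation the paper has in mind: the paper's own proof merely states that it is ``analogous to that of \citet[Theorem 2]{damien2012}'', and what you wrote is exactly that analogue, deriving (a) from the unnormalized posterior $\pi^N(\bm\rho\mid\theta)\propto\exp[-\theta D(\bm\rho)-\eta_0 D^*(\bm\rho)]$ and reading off (b)--(d) as specializations. One minor remark: the paragraph about the constant norm $\n{\bm\rho}=c_n$ is not actually needed for the proof, since the compact posterior form follows immediately from multiplying \eqref{likelihood} and \eqref{eq:conjPrior} without any expansion; it is only relevant if you want to reconcile this form with the alternative expression \eqref{eq:postThetaKnown}.
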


The theorem, analogous to Gupta and Damien's Theorem 2 and corollaries, gives an intuition on the behavior of the posterior density, by providing a relationship between $\theta$ and $\eta_0$, that determines which rankings receive the highest posterior probabilities. In Section \ref{sec:example} we illustrate, through simulated data, some of the consequences of this theorem on the inference. 

\subsection{Unknown precision parameter}\label{sub:thetaunknown}

When $\theta$ is unknown, the Bayesian paradigm requires a prior on the pair of parameters $(\bm\rho,\theta)$. We here suggest to choose a joint prior of the form $\pi(\bm\rho,\theta) = \pi(\theta)\pi(\bm\rho|\theta)$, where $\pi(\bm\rho|\theta)$ is the EMMS of eq. \eqref{eq:conjPrior}. 
Notice that the particular case of prior independence, $\pi(\bm\rho,\theta) = \pi(\theta)\pi(\bm\rho)$, is achieved in practice by choosing the parameter $\eta_0$ independent of $\theta$. 
Regarding the choice of $\pi(\theta)$ some proposals are present in the literature, for instance an exponential density \citep{vitelli17}, or the conjugate prior of \citet{Fligner1990}.

Alternatively, we suggest the use of the Jeffreys prior for $\theta$, which, in some specific cases, has a closed form and may be an interesting alternative when no information on $\theta$ is available \emph{a priori}. The following proposition holds for any MM with a right-invariant distance, and in particular for the MMS.  

\begin{myprop}\label{prop:jeffrey}
The Jeffreys prior for $\theta$ in a MM with right-invariant distance $d$ takes the form
\begin{equation}\label{eq:jeffrey}
    \pi_J(\theta)=
    \sqrt{ \mathbb{V}_{\bm{R}|\theta}\left[d(\bm{R}, \bm{\rho}_I )|\theta\right]},
\end{equation}
where $\mathbb V_{\bm R|\theta}$ denotes the variance with respect to $\bm R\sim\mathcal M(\bm{\rho}_I,\theta)$, which depends on $\theta$.
\end{myprop}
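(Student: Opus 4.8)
The plan is to recognize the Mallows model as a one-parameter exponential family in $\theta$ and then invoke the standard identity relating the Fisher information of such a family to the variance of its sufficient statistic. Since $d$ is right-invariant, $Z_d(\theta)$ does not depend on $\bm\rho$, and the law of $d(\bm R,\bm\rho)$ under $\bm R\sim\mathcal M(\bm\rho,\theta)$ is the same for every $\bm\rho$; I may therefore fix $\bm\rho=\bm\rho_I$ without loss of generality and write the log-likelihood of a single observation as
$$\ell(\theta)=-\theta\,d(\bm R,\bm\rho_I)-\log Z_d(\theta).$$
This exhibits $d(\bm R,\bm\rho_I)$ as the sufficient statistic, $\theta$ as the natural parameter, and $A(\theta)\defeq\log Z_d(\theta)$ as the log-partition (cumulant) function.

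Next I would compute the first two derivatives of $A$. Because $\mathcal P_n$ is finite, differentiation under the (finite) summation sign defining $Z_d(\theta)=\sum_{\bm r}e^{-\theta d(\bm r,\bm\rho_I)}$ is immediate and needs no regularity justification. A direct calculation gives $A'(\theta)=Z_d'(\theta)/Z_d(\theta)=-\mathbb E_{\bm R|\theta}[d(\bm R,\bm\rho_I)]$ and, by the quotient rule, $A''(\theta)=Z_d''(\theta)/Z_d(\theta)-\big(Z_d'(\theta)/Z_d(\theta)\big)^2$, which is exactly the variance
$$A''(\theta)=\mathbb E_{\bm R|\theta}\big[d(\bm R,\bm\rho_I)^2\big]-\big(\mathbb E_{\bm R|\theta}[d(\bm R,\bm\rho_I)]\big)^2=\mathbb V_{\bm R|\theta}\big[d(\bm R,\bm\rho_I)\big].$$

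Then I would assemble the Fisher information. Differentiating $\ell$ twice gives $\partial_\theta^2\ell(\theta)=-A''(\theta)$, which is non-random, so the information of a single observation is $I(\theta)=-\mathbb E_{\bm R|\theta}[\partial_\theta^2\ell(\theta)]=A''(\theta)=\mathbb V_{\bm R|\theta}[d(\bm R,\bm\rho_I)]$. The Jeffreys prior is by definition proportional to $\sqrt{I(\theta)}$, which yields \eqref{eq:jeffrey}; for a sample of size $N$ the information scales to $N\,I(\theta)$, altering only the irrelevant proportionality constant $\sqrt N$, so the functional form is preserved.

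I do not anticipate a genuine obstacle: the statement is the classical cumulant identity for exponential families, and the finiteness of $\mathcal P_n$ removes any concern about interchanging differentiation and summation. The only point requiring care is the appeal to right-invariance, which simultaneously guarantees that $Z_d$ (hence $A$ and its derivatives) is independent of $\bm\rho$ and that the variance in \eqref{eq:jeffrey} may be evaluated at the identity $\bm\rho_I$ without loss of generality.
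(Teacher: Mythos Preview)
Your proposal is correct and follows essentially the same route as the paper: both compute the second derivative of the log-likelihood, express it via $Z_d'/Z_d$ and $Z_d''/Z_d$, and identify the result as the variance of $d(\bm R,\bm\rho_I)$ under $\mathcal M(\bm\rho_I,\theta)$. Your packaging in exponential-family/cumulant language is slightly more conceptual, and your remarks on finiteness of $\mathcal P_n$ and the irrelevance of the factor $\sqrt{N}$ are welcome clarifications, but the underlying argument is the same.
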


The posterior density of the model parameters, with the conjugate prior $\pi(\bm\rho|\theta)$  given in eq. \eqref{eq:conjPrior} and possibly independent of $\theta$, and with one of the three prior distributions for $\theta$ mentioned above is 
\begin{equation}\label{eq:post}
\pi^N(\bm\rho,\theta)\propto \frac{\pi(\theta)}{Z^N(\theta)Z^*(\eta_0, \bm\rho_0)}\exp\left\{-\theta N \left[\left(\n{ \bm\rho-\bar{\bm R}}+c_n
- \n{\bar{\bm R}}\right)\right] - \eta_0\n{\bm\rho_0-\bm\rho}\right\}.
\end{equation}

Eq. \eqref{eq:post} can be easily evaluated  in two cases: when (a) $Z^*$ does not depend on $\theta$, that is, when $\eta_0$ is independent of $\theta$, or when (b) $\eta_0= \theta N_0$, and $n$ is small enough, so that $Z^*$ can be calculated exactly, for given prior hyperparameters $\bm\rho_0$ and $N_0$. \\
The more problematic case (c) when $\eta_0= \theta N_0$ and $n$ is too large for computing $Z^*$ exactly, can be handled by using as prior density for $\theta$, $\pi_\text{large n}(\theta)\propto Z^*(\theta N_0, \bm\rho_0)$, so that the posterior density \eqref{eq:post} can be written as 
\begin{equation}\label{eq:post2}
\pi^N(\bm\rho,\theta)\propto \frac{1}{Z^N(\theta)}
\exp\left\{-\theta \left[N\left(\n{ \bm\rho-\bar{\bm R}}+c_n
- \n{\bar{\bm R}}\right) + N_0\n{\bm\rho_0-\bm\rho}\right]\right\}.
\end{equation}

In the next section we sketch the algorithms developed for inference on the MMS in both cases of known and unknown $\theta$, within the situations (a), (b) and (c) described above. 

\section{Posterior simulation}\label{sec:inference}
Notice that, when $\theta=\theta^*$ is known, the posterior \eqref{eq:postThetaKnown} is known up to a normalization constant. Posterior simulation is straightforward in this case and it basically reduces to a visualization problem because of the complexity of the space of permutations. 
In this simple case, we employ a Metropolis-Hastings (MH) Markov Chain Monte Carlo (MCMC) scheme for the update of $\bm{\rho}$.
We propose $\bm{\rho}^{\prime}$ according to the Leap and Shift distribution of \citet{vitelli17}, which is an asymmetric proposal centered around the current value of $\bm{\rho}$. 
We then accept $\bm{\rho}^{\prime}$ with probability $\epsilon=\min\{1,a_{\bm\rho}\}$, where
\begin{equation}\label{eq:accept_rho}
\log a_{\bm\rho}= 2\theta^*(\bm{\rho}^{\prime}-\bm{\rho})\cdot\tilde{\bm R}+\log p_{LS}(\bm{\rho}^{\prime}|\bm{\rho})-\log p_{LS}(\bm{\rho}|\bm{\rho}^{\prime}),
\end{equation}
where, $\tilde{\bm R}=N\bar{\bm R}+N_0\bm\rho_0$, and $p_{LS}$ denotes the transition probability of the Leap and Shift distribution. Notice that, for the sake of simplicity, we are considering the case $\eta_0=\theta^* N_0$, but the results follow trivially for other parametrizations. 

When $\theta$ is not known, we implement a Metropolis within Gibbs scheme for posterior simulation. However, further considerations must be made for the different cases outlined in Section \ref{sub:thetaunknown}. First, we consider case (a), where $\bm\rho$ is assumed \emph{a priori} independent of $\theta$, which amounts to eliciting $\eta_0$ of eq. \eqref{eq:conjPrior} independently of $\theta$; in cases (b) and (c) the precision parameter of the EMMS takes the form $\eta_0 = \theta N_0$.

In (a) $Z^*$ is simply constant, so it creates no additional difficulty. Exact posterior inference can be performed when $n\leq 14$, that is, when we can compute  $Z$ exactly \citep[see][]{vitelli17}. When $n>14$ posterior inference cannot be performed exactly, but we can exploit the efficient scheme of \citet[][Algorithm 1]{vitelli17}, which targets an approximation of the posterior density. Only the acceptance probabilities of the two M-H steps are different here, due to the introduction of the non-uniform prior density on $\bm\rho$. 

In cases (b) and (c) we have the additional issue of dealing with $Z^*$, for which different solutions are possible.
In (b), that is for small $n$, we can compute $Z^*$ on a grid of $\eta_0$ values; whenever its evaluation is required within the M-H step for the update of $\theta$, an approximate value can be obtained via interpolation for values of $\eta_0=\theta N_0$ not in the grid.
In this case we therefore have two steps. First, we update $\bm\rho$ conditional on $\theta$ from the posterior full conditional (see eq.\eqref{eq:post}),
\begin{equation}\label{eq:fullrho}
\pi^N(\bm\rho|\theta)\propto \exp\left[2\theta \bm\rho\cdot \tilde{\bm R}\right].
\end{equation}
This is done as described above, that is, we propose $\bm{\rho}^{\prime}$ according to the Leap and Shift distribution and accept it with probability $\epsilon=\min\{1,a_{\bm\rho}\}$, where $a_{\bm\rho}$ is given in eq. \eqref{eq:accept_rho}, with $\theta^*$ equal to the current value of $\theta$. 
Second, we update $\theta$ conditional on $\bm\rho$. Note that the posterior full conditional for $\theta$ is
\begin{equation}\label{eq:fulltheta}
\pi^N(\theta|\bm\rho)\propto \pi^N(\bm\rho,\theta)\propto \frac{\pi(\theta)}{Z^N(\theta)Z^*(\theta N_0, \bm\rho_0)}\exp\left[-\theta(\tilde{g}-2\bm\rho\cdot \tilde{\bm R})\right],
\end{equation}
where $\tilde{g}=(2N+N_0)c_n+N_0\n{\rho_0}$. The proposal $\theta'$ is sampled from a log-normal density centered on the current value of $\theta$ with a variance tuned in order to obtain a desired acceptance rate. 

In (c), that is, for large values of $n$, only the proposed prior for $\theta$, and therefore  its posterior full conditional, changes and it is given by
\begin{equation}
\pi^N(\theta|\bm\rho)\propto \frac{1}{Z^N(\theta)}\exp\left[-\theta(\tilde{g}-2\bm\rho\cdot \tilde{\bm R})\right].
\end{equation}
Posterior simulation is therefore identical to that of case (b), with the obvious difference in the acceptance probability for $\theta$. 

\section{Illustrative analyses}\label{sec:example}

\subsection{Simulation study}\label{sec:simu1}
In this section we illustrate the effect of the prior on the posterior via a small simulated dataset.
A small $n$ is used so that all possible permutations can be listed.

We generate a sample of $N=30$ rankings from $\mathcal{P}_4$ from the MMS with given \emph{true} parameters $\bm\rho^*=(2,1,4,3)$ and $\theta^*=0.06$. We then set the prior consensus to ${\bm \rho}_{0}=(2, 1, 3,4)$, and perform inference on the model in different settings corresponding to increasing prior sample size for the prior parametrization $\eta_0=\theta N_0$, and the Jeffreys prior for $\theta$. The observed sample mean vector is $\bar{\bm R}=(2.33, 2.17, 3, 2.5)$, which leads to ${\bm \rho}_{\text{MLE}}=\bm Y(\bar{\bm R})=(2, 1, 4, 3)$.
We report in Table \ref{tab:simu} the estimated posterior probability (EPP) of each of the rankings in $\mathcal{P}_4$.
Notice that ${\bm \rho}_{\text{MLE}}$ is the ranking with smallest value of $D(\bm\rho)$ (row highlighted in light-gray and with bold characters). Studying this table, we can verify that Theorem \ref{MIOtheo} holds. 
For instance, solving eq. \eqref{eq:theo} with  $\bm{\rho}_1=\bm{\rho}_0$ and $\bm{\rho}_2=\bm{\rho}_\text{MLE}$, we obtain that $\bm{\rho}_0$ has a higher posterior probability than $\bm{\rho}_\text{MLE}$ if and only if $N_0>15$, which the empirical results confirm.
Also, all rankings $\bm{\rho}$ with $D^*(\bm{\rho})\leq D^*(\bm{\rho}_\text{MLE})$ have lower posterior probabilities than $\bm{\rho}_\text{MLE}$. Furthermore, if $D^*(\bm{\rho}_1)= D^*(\bm{\rho}_2)$, then $\bm{\rho}_1$ has a higher posterior probability than $\bm{\rho}_2$ if $D(\bm{\rho}_1) > D(\bm{\rho}_2)$. 

We can also notice the following \emph{sensitivity} behavior of the posterior probabilities: with increasing $N_0$ the rankings which are closer to $\bm\rho_0$ (in terms of Spearman's distance, or equivalently a smaller $D^*(\bm\rho)$) have increasing posterior probabilities, while those that are farthest, have decreasing posterior probabilities, even if the distance to the data $D(\bm\rho)$ is not so high. An example of this can be seen in the row corresponding to $\bm\rho=(3,1,4,2)$, which has $D(\bm\rho)=230$ and $D^*(\bm\rho)=6$ and for which increasing $N_0$ from 0 to 20 has the effect of decreasing the posterior probability from 0.169 to 0.012. 
The posterior means of $\theta$ in the six settings were 0.068, 0.074, 0.065, 0.06, 0.057, 0.055, while $\theta_\text{MLE} =0.08$ 

\begin{table}[h!]
\centering\footnotesize
\begin{tabular}{|c|c|c|c|c|c|c|c|c|}
  \hline
$\bm\rho$& $D(\bm\rho)$&$D^*(\bm\rho)$& $N_0=0$ & $N_0=5$ & $N_0=10$ & $N_0=15$ & $N_0=16$ &$N_0=20$\\ 
  \hline
  \rowcolor{lightgray}(1,2,3,4) & 260 & 2 & 0.029 & 0.038 & 0.050 & 0.053 & 0.053 & 0.050 \\ 
  (1,2,4,3) & 230 & 4 & 0.172 & 0.125 & 0.080 & 0.052 & 0.050 & 0.036 \\ 
  (1,3,2,4) & 310 & 6 & 0.007 & 0.003 & 0.003 & 0.004 & 0.004 & 0.004 \\ 
  (1,3,4,2) & 250 & 10 & 0.049 & 0.010 & 0.005 & 0.004 & 0.004 & 0.003 \\ 
  (1,4,2,3) & 330 & 12 & 0.004 & 0.001 & 0.001 & 0.002 & 0.002 & 0.001 \\ 
  (1,4,3,2) & 300 & 14 & 0.007 & 0.002 & 0.001 & 0.002 & 0.001 & 0.001 \\ 
  \rowcolor{darkgray}(2,1,3,4) & 250 & 0 & 0.048 & 0.129 & 0.257 & 0.417 & 0.436 & 0.546 \\ 
  \rowcolor{lightgray}
  \textbf{(2,1,4,3)} &\textbf{ 220 }&\textbf{ 2 }&\textbf{ 0.367 }&\textbf{ 0.579 }&\textbf{ 0.527 }&\textbf{ 0.410 }&\textbf{ 0.386}&\textbf{0.303} \\ 
  (2,3,1,4) & 350 & 8 & 0.003 & 0.001 & 0.001 & 0.002 & 0.002 & 0.002 \\ 
  (2,3,4,1) & 260 & 14 & 0.029 & 0.005 & 0.003 & 0.002 & 0.002 & 0.002 \\ 
  (2,4,1,3) & 370 & 14 & 0.002 & 0.001 & 0.001 & 0.001 & 0.001 & 0.001 \\ 
  (2,4,3,1) & 310 & 18 & 0.006 & 0.001 & 0.001 & 0.001 & 0.001 & 0.001 \\ 
  \rowcolor{lightgray}(3,1,2,4) & 290 & 2 & 0.009 & 0.010 & 0.015 & 0.017 & 0.023 & 0.022 \\ 
  (3,1,4,2) & 230 & 6 & 0.169 & 0.065 & 0.032 & 0.016 & 0.017 & 0.012 \\ 
  (3,2,1,4) & 340 & 6 & 0.003 & 0.002 & 0.002 & 0.003 & 0.003 & 0.003 \\ 
  (3,2,4,1) & 250 & 12 & 0.049 & 0.007 & 0.004 & 0.002 & 0.002 & 0.002 \\ 
  (3,4,1,2) & 380 & 18 & 0.002 & 0.001 & 0.001 & 0.001 & 0.001 & 0.001 \\ 
  (3,4,2,1) & 350 & 20 & 0.003 & 0.001 & 0.001 & 0.001 & 0.001 & 0.001 \\ 
  (4,1,2,3) & 300 & 6 & 0.007 & 0.005 & 0.004 & 0.004 & 0.004 & 0.004 \\ 
  (4,1,3,2) & 270 & 8 & 0.019 & 0.007 & 0.006 & 0.004 & 0.004 & 0.004 \\ 
  (4,2,1,3) & 350 & 10 & 0.003 & 0.002 & 0.001 & 0.001 & 0.002 & 0.001 \\ 
  (4,2,3,1) & 290 & 14 & 0.009 & 0.002 & 0.002 & 0.001 & 0.001 & 0.001 \\ 
  (4,3,1,2) & 370 & 16 & 0.002 & 0.001 & 0.001 & 0.001 & 0.001 & 0.001 \\ 
  (4,3,2,1) & 340 & 18 & 0.003 & 0.001 & 0.001 & 0.001 & 0.001 & 0.000 \\ 
   \hline
\end{tabular}
\caption{Results of the simulation study of Section \ref{sec:simu1}. List of the 24 4-rankings (column 1), along with the quantities $D(\bm\rho)$ and $D^*(\bm\rho)$ defined in Theorem \ref{MIOtheo} (columns 2 and 3 respectively). Columns 4 to 9 contain the estimated posterior probabilities of each ranking (rows) and each setting, for increasing values of $N_0$. Four rows are highlighted: in dark-gray, the prior consensus $\bm\rho=\bm\rho_0$ ($D^*(\bm\rho)=0$); in light-gray, the rankings nearest $\bm\rho_0$ ($D^*(\bm\rho)=2$). The  MLE (where $D(\bm\rho)=220$ is minimized) is indicated by bold characters.}\label{tab:simu}
\end{table}

\subsection{\texttt{idea} dataset}
For illustrative purposes, in this section we use the benchmark dataset \texttt{idea} \citep[see e.g.][]{Fligner1990,damien2012}. The data, collected by the Graduate Record Examination (GRE) Board,  consist of a sample of $N=98$ rankings, each of them generated by a college student who was asked to rank $n=5$ words according to their strength of association with the target word `idea'. The five words are `thought' (A), `play' (B), `theory' (C), `dream' (D), and `attention' (E). 
Our aim is to show the effect of our informative prior for $\bm\rho$ on inference. Since $n$ is very small in this example, we can use the exact framework for posterior simulation outlined in Section \ref{sec:inference}, and choose the Jeffreys prior for the parameter $\theta$, thus reflecting our lack of prior knowledge.
In this example, we assume there is reason to believe that $\bm{o}_0$=(A,D,C,B,E) is the true ordering of association of the five words. We therefore choose the corresponding ranking vector $\bm{\rho}_0 =(1, 4, 3, 2, 5)$ as the prior mode. The choice of $N_0$, interpreted as an equivalent sample size, reflects our confidence in $\bm{\rho}_0$, so we consider different settings, corresponding to increasing values of $N_0$. Inference is carried out via MCMC posterior simulation, using a sample size of $5\cdot10^4$ iterations, after a burn-in of $5\cdot10^3$, and the results are shown in Table \ref{tab:idea}.
The orderings corresponding to the most frequently observed rankings in the dataset and their empirical frequencies or sample proportions are shown in columns 1 and 2 respectively, along with their estimated posterior probabilities (EPP) in the different settings (columns 3 to 8). In column 9 we report the Spearman distance between each of the top observed ranking and the prior mode (that is, $D^*(\bm\rho)$).

Recall that our prior  \eqref{eq:conjPrior} assigns equal mass to all rankings at the same Spearman distance from $\bm\rho_0$. This behavior has some analogies with the prior of \citet{damien2012}. However, while there is always a unique ranking at Spearman's distance 0 from $\bm\rho_0$, each conjugacy class contains more than one ranking, all of which are assigned the same mass by the prior of \citet{damien2012}, henceforth GD. As we show below, this difference has a relevant effect on the posterior inferences based on our prior \eqref{eq:conjPrior}, when compared to the results by GD.

\begin{table}[h!]
\centering\footnotesize
\begin{tabular}{|c|c|cccccc|c|}
  \hline
 $\bm{o}$& Prop. & $N_0=0$ & $N_0=1$ & $N_0=5$  & $N_0=10$  & $N_0=49$  & $N_0=98$& $D^*(\bm{\rho})$ \\ 
  \hline
 ACDEB & 0.337 & \textbf{0.047} & \textbf{0.055} & \textbf{0.062} & \textbf{0.076} & 0.105 & 0.078 & 4 \\ 
 ADCEB & 0.184 & 0.037 & 0.044 & 0.050 & 0.060 & \textbf{0.129} & 0.129 & 2\\ 
 ACDBE & 0.122 & 0.031 & 0.035 & 0.041 & 0.048 & 0.095 & 0.103 & 2\\ 
 ADCBE & 0.082 & 0.025 & 0.030 & 0.034 & 0.041 & 0.114 & \textbf{0.176} & 0\\ 
  ACEDB & 0.061 & 0.022 & 0.028 & 0.024 & 0.025 & 0.018 & 0.015 & 10\\ 
 CADEB & 0.051 & 0.025 & 0.028 & 0.030 & 0.026 & 0.023 & 0.019&8 \\
 ADECB & 0.051 & 0.015 & 0.019 & 0.021 & 0.020 & 0.020 & 0.020 &6\\ 
   \hline
\end{tabular}
\caption{Results for the \texttt{idea} dataset. List of orderings corresponding to the rankings with the highest observed frequencies in the data (columns 1 and 2 respectively), along with their EPP in different settings, corresponding to values of $N_0$ between 0 and $N$ (columns 3 to 8). In column 9 we present the Spearman distance between each ranking and the prior mode. The highest EPP of each setting is highlighted in bold characters.}\label{tab:idea}
\end{table}

From this table we can notice the following:
\begin{itemize}
    \item the EPP of (A,D,C,B,E), which corresponds to the prior mode $\bm{\rho}_0$ (row 4), increases consistently with $N_0$; when $N_0=N$, it becomes the posterior modal ranking; 
    \item the ordering (A,C,D,E,B), corresponding to $\bm\rho_{MLE}$ (row 1), remains the ranking with largest EPP provided that the equivalent sample size $N_0$ is not too large. In other words, if the prior does not assign too much mass to $\bm\rho_0\neq\bm\rho_{MLE}$;
    \item the relative ordering of the seven rankings in terms of posterior probability depends on $N_0$, changing for large values which imply strong prior information.
\end{itemize}

Comparing our results with the findings of GD (Table 3), we notice that:
\begin{enumerate}
    \item the posterior distribution of GD places most of the mass (about 0.93) on the top 6 rankings, thus penalizing all other rankings in $\mathcal{P}_5$;
    \item the EPP of the prior modal ranking with ordering (A,D,C,B,E), obtained by GD does not increase with the concentration parameter (in their paper denoted by $\lambda^*$), but rather decreases (from 0.019 when $\lambda^*=0$,  to 0.0067 when $\lambda^*=0.1$). This is not in line with the expected behavior of an informative prior. 
\end{enumerate}

Our posterior distributions, instead, are generally flatter and, importantly, do not show the contradictory behavior with respect to the concentration parameter exhibited by the results of GD and which is probably a consequence of the complex structure of the conjugacy classes of $\mathcal{P}_n$.


\subsection{The prior elicitation problem in practice}\label{ssec:exampleSushi}
In this section we exploit covariates to provide an intuition of how to introduce available information in the prior elicitation problem. 
For the illustration we use the \texttt{sushi} benchmark data of \citet{Kamishima2003}, which consists of full rankings of $n = 10$ different kinds of sushi items given by $N = 5000$ respondents according to their personal preference. The data are available at \url{http://www.kamishima.net/sushi/}. 
This dataset is particularly interesting because it includes covariates of the sushi items. 
We can therefore use this additional information to build an informative prior over the consensus ranking.

We begin from the elicitation of the consensus ranking hyper-parameter $\bm{\rho}_0$ of eq. \eqref{eq:conjPrior}.
We believe that the following covariates of the sushi items (see Table \ref{tab:sushicov}) may have an impact on the personal preference of the respondents: 
\begin{enumerate}
\setlength{\itemsep}{-0.5\baselineskip}
   \item \texttt{oil}: the oiliness in taste (measured on a 0-4 continuous scale, where the smallest the value is, the more oily is the sushi item);
   \item \texttt{eat}: How frequently the sushi item is eaten in sushi shops (measured on a 0-3 continuous scale, where high values correspond to highly frequently sold);
   \item \texttt{price}: the normalized price of the item;
   \item \texttt{sell}: the frequency with which the sushi item is sold  (measured on a 0-1 continuous scale, where high values correspond to highly frequently eaten).
\end{enumerate}

\begin{table}[h!]
\centering\footnotesize
\begin{tabular}{|r|c|c|c|c|}
  \hline
Sushi item & \texttt{oil} & \texttt{eat} & \texttt{price} & \texttt{sell} \\ 
  \hline
shrimp & 2.73 & 2.14 & 1.84 & 0.84 \\ 
  sea eel & 0.93 & 1.99 & 1.99 & 0.88 \\ 
  tuna & 1.77 & 2.35 & 1.87 & 0.88 \\ 
  squid & 2.69 & 2.04 & 1.52 & 0.92 \\ 
  sea urchin & 0.81 & 1.64 & 3.29 & 0.88 \\ 
  salmon roe & 1.26 & 1.98 & 2.70 & 0.88 \\ 
  egg & 2.37 & 1.87 & 1.03 & 0.84 \\ 
  fatty tuna & 0.55 & 2.06 & 4.49 & 0.80 \\ 
  tuna roll & 2.25 & 1.88 & 1.58 & 0.44 \\ 
  cucumber roll & 3.73 & 1.46 & 1.02 & 0.40 \\ 
   \hline
\end{tabular}\caption{Covariate values of interest (columns) for each of the $n=10$ sushi items (rows).}\label{tab:sushicov}
\end{table}

Therefore, we may include the information contained in these four covariates into the analysis of the ranking data, through the following subjective reasoning: the more oily the sushi item is, the more it is preferred; a sushi item which is frequently eaten, is more likely to be preferred than one eaten less frequently; expensive items are preferred above cheaper ones; finally, items which are sold more, are known more and hence preferred. Clearly, the above assumptions are subjective, and someone else may decide to include these covariates differently (for instance, the price may play the opposite role).  
Table \ref{tab:sushirank} shows the rank vectors obtained from these criteria by applying the rank transformation of Definition \ref{def:rank_of} to the covariate vectors of  Table \ref{tab:sushicov}. Notice that the transformation does not result in a proper ranking for the \texttt{sell} variable (column 5): sea eel, tuna, sea urchin and salmon roe have the same covariate value (0.88 in Table  \ref{tab:sushicov}), which results in a tied rank (3.5 in Table \ref{tab:sushirank}). Analogously, shrimp and egg, have the the same value (0.84) resulting a the tied rank (6.5). Nonetheless, the transformed vector for the covariate \texttt{eat} is indeed an element of the permutation polytope $\mathbb{pp}_{10}$, and could therefore be a valid choice for the hyper-parameter $\bm{\rho}_0$. Another interesting feature of Table \ref{tab:sushirank} is that the rankings induced by the different covariates (columns) are not equal but partially agree. A possible choice for the prior consensus hyper-parameter, which takes into account these four different rankings is to set it equal to the average of the rankings induced by the four covariates, that is,
$\bm\rho_{01} = (5.875,3.875,3.625,5.25,4.125,4.125,7.625,3.25,7.25,10)\in\mathbb{pp}_{10}$. Alternatively, one may consider the corresponding rank vector, $\bm\rho_{02}=\bm Y(\bm\rho_{01})=(7,3,2,6,4.5,4.5,9,1,8,10)\in\mathcal{P}_{10}$. 

\begin{table}[h!]
\centering\footnotesize
\begin{tabular}{|r|c|c|c|c|}
  \hline
Sushi item & \texttt{oil} & \texttt{eat} & \texttt{price} & \texttt{sell} \\ 
  \hline
shrimp & 9 & 2 & 6 & 6.5 \\ 
  sea eel & 3 & 5 & 4 & 3.5 \\ 
  tuna & 5 & 1 & 5 & 3.5 \\ 
  squid & 8 & 4 & 8 & 1 \\ 
  sea urchin & 2 & 9 & 2 & 3.5 \\ 
  salmon roe & 4 & 6 & 3 & 3.5 \\ 
  egg & 7 & 8 & 9 & 6.5 \\ 
  fatty tuna & 1 & 3 & 1 & 8 \\ 
  tuna roll & 6 & 7 & 7 & 9 \\ 
  cucumber roll & 10 & 10 & 10 & 10 \\
   \hline
\end{tabular}
\caption{Rank vectors for the sushi items, obtained from the covariates via the rank transformation of Definition \ref{def:rank_of}.}\label{tab:sushirank}
\end{table}

The elicitation of the precision parameter, $\eta_0$, requires a more qualitative reasoning. Considering the parametrization $\eta_0=\theta_0 N_0$, we may decide to fix $N_0=4$, since the consensus hyper-parameter comes from the average of four rankings, which may be interpreted as the opinions of four experts. At the same time, we may choose a relatively large value of $\theta_0$, for instance $\theta_0=0.1$ (which is considered large, given the scale of the problem), thus reflecting confidence in $\bm\rho_0$, given the partial agreement of the four rankings used to construct the consensus hyper-parameter.

\section{Conclusion}\label{sec:conclusion}
In this paper we have proposed an informative prior distribution for the consensus ranking of the Mallows model with Spearman's distance. 
The peculiarity of the proposed prior is that it is a location-scale family for which the location parameter does not need to be a ranking. 
This is convenient for the elicitation problem, since the prior can naturally handle the case when it is difficult to indicate a full ranking which is $\emph{a priori}$ the most likely. For instance, when the total number of items in the application considered is very large, it may be unlikely that an expert is able to elicit a prior ranking over all the items. On the contrary, it may be possible to put some prior information only over the top-ranked items. This is often the case in genomics applications, where thousands of genes are considered in the statistical analysis, but only few of them are known to be related to some disease. Another case which is naturally handled by our prior, is when multiple competing rankings are available prior to the analysis, and we are interested in expressing equally strong prior beliefs on them. 

A limitation, discussed in Section \ref{sec:inference}, arises from the intractability of the normalizing constant $Z^*$ of \eqref{eq:conjPrior} when the location parameter is not itself a ranking. Possible directions for future work include exploring tractable approximations for this quantity, perhaps in the spirit of \citet{mukherjee2016}. In general, more efficient methods for posterior simulations might be developed, but these developments fall outside of the scope of the present work. We do hope, however, that some of the ideas presented here can shed light on potentialities and limitations of the Mallows model with Spearman's distance, and encourage further developments in constructing more flexible priors.

All the simulation algorithms are implemented in \proglang{R} with the \proglang{cpp} package, and will soon be integrated into the \pkg{BayesMallows} \proglang{R} package \citep{BayesMallows}.

\section*{Appendix}\label{app}
Before stating the proof of Proposition \ref{MIOtheo}, let us introduce the formal notion of right-invariance which will prove useful in the proof. 

\begin{mydef}\label{rightInv}{{\bf Right-invariant distance} \citep{Diaconis1988}.}
A distance function is right-invariant, if $d(\bm{\rho},\bm{\sigma})= d(\bm{\rho}\circ\bm{\eta},\bm{\sigma}\circ\bm{\eta})$ for all $\bm{\eta},\bm{\rho},\bm{\sigma}\in\mathcal{P}_n$. With $\bm{\rho}\circ\bm{\eta}$ we denote the composition function of two permutations $\bm{\rho},\bm{\eta}\in\mathcal{P}_n$, which is defined as $\bm{\rho}\circ\bm{\eta}=\bm{\rho}_{\bm{\eta}}=(\rho_{\eta_1},...,\rho_{\eta_n})$.

\end{mydef}

\begin{proof}[Proof of Proposition \ref{MIOtheo}]
The following two identities hold by right-invariance (see Definition \ref{rightInv}):
\begin{align}
&\sum_{i=1}^n\rho_i\bar{R}_i=\sum_{i=1}^ni(\bar{\bm{R}}\circ\bm{\rho}^{-1})_i \label{p1}\\
&\sum_{i=1}^n\rho_i{Y}_i(\bar{\bm{R}})=\sum_{i=1}^ni({Y}(\bar{\bm{R}})\circ\bm{\rho}^{-1})_i=\sum_{i=1}^ni{Y}_i(\bar{\bm{R}}\circ\bm{\rho}^{-1})\label{p2}
\end{align}

Eq. \eqref{p1} implies that $\hat{\bm{\rho}}_1=\argmax_{\bm{\rho}\in\mathcal{P}_n}\sum_{i=1}^ni(\bar{\bm{R}}\circ\bm{\rho}^{-1})_i$ is such that $(\bar{\bm{R}}\circ\hat{\bm{\rho}}_1^{-1})_1\leq (\bar{\bm{R}}\circ\hat{\bm{\rho}}_1^{-1})_2\leq\cdots\leq(\bar{\bm{R}}\circ\hat{\bm{\rho}}_1^{-1})_n$ (by Lemma 2 in  \citet{hullermeier2008label}).\\
By \eqref{p2}, it follows that $\hat{\bm{\rho}}_2=\argmax_{\bm{\rho}\in\mathcal{P}_n}\sum_{i=1}^ni{Y}_i(\bar{\bm{R}}\circ\bm{\rho}^{-1})$, is such that $Y_i(\bar{\bm{R}}\circ\hat{\bm{\rho}}_2^{-1})=i$, for each $i=1,...,n$.\\
Now, notice that  $(\bar{\bm{R}}\circ\hat{\bm{\rho}}_1^{-1})_1\leq (\bar{\bm{R}}\circ\hat{\bm{\rho}}_1^{-1})_2\leq\cdots\leq(\bar{\bm{R}}\circ\hat{\bm{\rho}}_1^{-1})_n$ if and only if $Y_i(\bar{\bm{R}}\circ\hat{\bm{\rho}}_1^{-1})=i$, for each $i=1,...,n$.
This proves that $\hat{\bm{\rho}}_1=\hat{\bm{\rho}}_2$.

\end{proof}

\begin{proof} [Proof of Proposition \ref{prop:objPrior}]
It is sufficient to prove that
 $$\pi(\boldsymbol{\rho}|\theta)\propto \min_{\boldsymbol{\rho}^*\in\mathcal{P}_n\setminus \{\boldsymbol{\rho}\}} KL\big[p(\cdot|\boldsymbol{\rho},\theta)\,||\,p(\cdot|\boldsymbol{\rho}^*,\theta)\big]$$ is independent of $\boldsymbol{\rho}$, for any fixed value of $\theta$, where
 \begin{equation}
     \begin{split}
         KL\big[p(\cdot|&\boldsymbol{\rho},\theta)\,||\,p(\cdot|\boldsymbol{\rho}^*,\theta)\big]= \sum_{\bm r\in\mathcal{P}_n} p(\bm r|\boldsymbol{\rho},\theta)\log\left[\frac{p(\bm r|\boldsymbol{\rho},\theta)}{p(\bm r|\boldsymbol{\rho^*},\theta)}\right]\\ &=\frac{\theta}{Z(\theta)}\sum_{\bm r\in\mathcal{P}_n} \exp\left[-\theta \n{\bm\rho-\bm r}\right]\left[\n{\bm\rho^*-\bm r}-\n{\bm\rho-\bm r}\right]\\
         &=\frac{\theta}{Z(\theta)}\sum_{\bm r\in\mathcal{P}_n} \exp\left[-\theta \n{\bm\rho_I-\bm r}\right]\left[\n{\bm\rho^*-\bm r}-\n{\bm\rho_I-\bm r}\right],
     \end{split}
 \end{equation}
 where the last equality follows from the right-invariance of the Spearman distance. By the same argument, it follows that
 \begin{equation}
     \begin{split}
       \pi(\boldsymbol{\rho}|\theta) & \propto\min_{\boldsymbol{\rho}^*\in\mathcal{P}_n\setminus \{\boldsymbol{\rho}\}} KL\big[p(\cdot|\boldsymbol{\rho},\theta)\,||\,p(\cdot|\boldsymbol{\rho}^*,\theta)\big]\\
       & =\frac{\theta}{Z(\theta)} \min_{\boldsymbol{\rho}^*\in\mathcal{P}_n\setminus \{\boldsymbol{\rho_I}\}}\sum_{\bm r\in\mathcal{P}_n} \exp\left[-\theta \n{\bm\rho_I-\bm r}\right]\left[\n{\bm\rho^*-\bm r}-\n{\bm\rho_I-\bm r}\right],  
     \end{split}
 \end{equation}
does not depend on $\bm\rho$, thus ending the proof.
\end{proof}

\begin{proof}[Proof of Theorem \ref{MIOtheo}]
The proof is  analogous  to  that  of  \citet[Theorem 2]{damien2012}.

\end{proof}

\begin{proof}[Proof of Proposition \ref{prop:jeffrey}]
The Jeffreys prior for a parameter $\theta$ is defined as $\pi(\theta)\propto \sqrt{\mathcal{I}(\theta)}$, where $\mathcal{I}(\theta)$ is the Fisher information function of the statistical model
$$\mathcal{I}(\theta)=-\mathbb{E}_{\bm{R}|\theta}\left[\frac{\text{d}^2}{\text{d}\theta^2}\ln p(\bm{R}|\theta,\bm\rho)\Big|\theta\right].$$
Recall that for the MM it holds 
$$\ln p(\bm{R}|\theta,\bm{\rho})=-\theta d(\bm{R},\bm{\rho})- \ln Z_d(\theta),$$ 
where $Z_d(\theta)=\sum_{\bm{r} \in \mathcal{P}_{n}} e^{-\theta d(\bm{r}, \bm{\rho}_I)}$. Let us simplify the notation here and set $Z_d(\theta):=Z(\theta)$.
Then,
$$\frac{\text{d}^2}{\text{d}\theta^2}\ln p(\bm{R}|\theta,\bm\rho)=-\frac{Z''(\theta)Z(\theta)-[Z'(\theta)]^2}{[Z(\theta)]^2},$$
is independent of $\bm\rho$. 
Notice also that
\begin{equation*}\begin{split}
     Z'(\theta) =& - \sum_{\bm{r} \in \mathcal{P}_{n}} d(\bm{r}, \bm{\rho}_I )e^{-\theta d(\bm{r}, \bm{\rho}_I )} = -Z(\theta) \sum_{\bm{r} \in \mathcal{P}_{n}} d(\bm{r}, \bm{\rho}_I )p(\bm{r}|\theta,\bm\rho=\bm{\rho}_I) \\ =&-\mathbb{E}_{\bm{R}|\theta}[d(\bm{R}, \bm{\rho}_I )|\theta]Z(\theta),
\end{split}
\end{equation*}
and
\begin{equation*}\begin{split}
Z''(\theta)&= \sum_{\bm{r} \in \mathcal{P}_{n}} d^2(\bm{r}, \bm{\rho}_I)e^{-\theta d(\bm{r}, \bm{\rho}_I)}= Z(\theta)\sum_{\bm{r} \in \mathcal{P}_{n}} d^2(\bm{r}, \bm{\rho}_I)p(\bm{r}|\theta,\bm\rho=\bm{\rho}_I)\\
&=\mathbb{E}_{\bm{R}|\theta}[d^2(\bm{R}, \bm{\rho}_I )|\theta]Z(\theta).
\end{split}
\end{equation*}

Then,
\begin{equation*}\begin{split}
\frac{\text{d}^2}{\text{d}\theta^2}\ln p(\bm{R}|\theta,\bm\rho)&=-\frac{\mathbb{E}_{\bm{R}|\theta}[d^2(\bm{R}, \bm{\rho}_I )|\theta][Z(\theta)]^2-[\mathbb{E}_{\bm{R}|\theta}[d(\bm{R}, \bm{\rho}_I )|\theta]]^2[Z(\theta)]^2}{[Z(\theta)]^2}\\
&=-\mathbb{V}_{\bm{R}|\theta}[d(\bm{R}, \bm{\rho}_I)|\theta].\end{split}
\end{equation*}

From the previous equations, it finally holds the result:
$$\pi(\theta)=\sqrt{\mathcal{I}(\theta)}=\sqrt{\mathbb{E}_{\bm{R}|\theta}\left[\mathbb{V}_{\bm{R}|\theta}[d(\bm{R}, \bm{\rho}_I)|\theta]\big|\theta\right]}=\sqrt{\mathbb{V}_{\bm{R}|\theta}\left[d(\bm{R}, \bm{\rho}_I )|\theta\right]}.$$

\end{proof}

\bibliographystyle{ba}
\bibliography{references.bib}

\section*{Acknowledgement}
The authors would like to thank Sonia Petrone, Elja Arjas and Arnoldo Frigessi for their insightful comments.
\end{document}